\documentclass[aps,prl,twocolumn,superscriptaddress,longbibliography]{revtex4-2}

\usepackage{amsmath,amssymb,amsthm}
\usepackage[colorlinks=true,linkcolor=blue,citecolor=blue,urlcolor=blue]{hyperref}

\newtheorem{lemma}{Lemma}
\newtheorem{theorem}{Theorem}

\newcommand{\Tr}{\operatorname{Tr}}
\newcommand{\nh}{\hat n}
\newcommand{\nb}{\bar n}

\begin{document}

\title{Pure states are optimal for total energy-constrained quantum channel discrimination}

\author{Christos N. Gagatsos}
\affiliation{School of Electrical, Computing, and Software Engineering, The University of Arizona, Tucson, Arizona 85721, USA}
\affiliation{Wyant College of Optical Sciences, The University of Arizona, Tucson, Arizona 85721, USA}
\affiliation{Program in Applied Mathematics, The University of Arizona, Tucson, Arizona 85721, USA}

\date{\today}

\begin{abstract}
We consider the discrimination of two quantum channels acting on bosonic modes, with a probe state of fixed mean photon number and no ancilla.
We prove that pure probes are optimal: for every mixed probe there is a pure probe with exactly the same mean photon number that performs at least as well.
The result holds for arbitrary channels and for every figure of merit built from the Chernoff quantities $\Tr[\rho_1^s\rho_2^{1-s}]$, including the quantum Chernoff exponent, as well as for the single-shot error probability.
The proof relies on a decomposition of any state of mean photon number $\nb$ into pure states of mean photon number exactly $\nb$, which extends a matrix rank-one decomposition of Huang and Zhang to density operators of infinite rank.
\end{abstract}

\maketitle

\textit{Introduction.}---
Deciding which of two physical processes has acted on a probe is a basic task behind a large part of quantum information science and of its applications.
Discriminating unitaries~\cite{Acin2001,DAriano2001}, detecting a target with quantum illumination~\cite{Lloyd2008,Tan2008}, reading a classical memory~\cite{Pirandola2011,Nair2011}, sensing loss and imaging~\cite{NairYen2011,Invernizzi2011}, deciding between one and two point sources below the diffraction limit~\cite{Lu2018}, and detecting a change point in a sequence of quantum processes~\cite{Sentis2016} are all instances of channel discrimination.
The same task underlies remote sensing, optical data readout, and the certification of quantum devices, and its ultimate limits have been studied intensively~\cite{Pirandola2019,Zhuang2020,Becker2021,Huang2026}.

A natural question is which probe state of fixed mean photon number $\nb$ discriminates the two channels best.
When the error probability decays with the number $M$ of channel uses as $e^{-M\xi_Q}$, the relevant figure of merit is the quantum Chernoff exponent (QCE) $\xi_Q$~\cite{Audenaert2007,Nussbaum2009}.

It is widely expected, and often assumed, that the optimal probe can be taken pure.
If an ancilla is available and only the photons sent through the channel are counted, this is easy to show: a mixed probe can be purified with the ancilla without changing the signal state, and discarding the purifying system can only degrade performance~\cite{NairYen2011,Huang2026}.
Without an ancilla the expectation rests on convexity: a mixed probe $\rho=\sum_jp_j|\phi_j\rangle\langle\phi_j|$ does not perform better than its best pure component.
However, the energy constraint imposes a complication:
the pure components $|\phi_j\rangle$ have in general mean photon numbers different from $\nb$, some larger and some smaller, so the best component may simply not be an admissible probe.
The same gap appears in quantum metrology, where convexity of the quantum Fisher information is used to restrict to pure probes at fixed energy~\cite{GongGuha2026,Maliakal2026}.

In this work we close this gap.
We show that every state of mean photon number $\nb$ can be written as a mixture of pure states, each of which has mean photon number \emph{exactly} $\nb$ (Lemma~\ref{lem:decomp}).
For finite rank this is a complex-Hermitian version of the matrix rank-one decomposition of Sturm and Zhang~\cite{SturmZhang2003} and Huang and Zhang~\cite{HuangZhang2007}; we extend it to density operators of infinite rank, i.e., continuous variables states, by an iteration that we control in the limit.
With this decomposition we then proceed to the main reult given in Theorem~\ref{thm:main}.

\textit{Setting.}---
The probe is a state $\rho$ of $m$ bosonic modes with annihilation operators $\hat a_1,\dots,\hat a_m$.
We write $\nh=\sum_{l=1}^m\hat a_l^\dagger\hat a_l$ for the total photon number and impose
\begin{equation}
\Tr[\nh\,\rho]=\nb ,\qquad 0\le\nb<\infty .
\label{eq:constraint}
\end{equation}
The probe is sent through one of two quantum channels, $\mathcal N_1$ or $\mathcal N_2$ (completely positive and trace preserving maps), and the receiver has to decide which one acted.
For two states $\sigma_1,\sigma_2$ let
\begin{equation}
Q_s=\Tr[\sigma_1^s\sigma_2^{1-s}],\qquad Q=\inf_{0\le s\le 1}Q_s,\qquad \xi_Q=-\ln Q ,
\label{eq:qce}
\end{equation}
where $\sigma^0$ denotes the projector onto the support of $\sigma$~\cite{Audenaert2007,Nussbaum2009}.
For a probe $\rho$ we write $Q_s(\rho)$, $Q(\rho)$ and $\xi_Q(\rho)$ for these quantities evaluated on $\sigma_i=\mathcal N_i(\rho)$.
With $M$ uses of the channel and the product probe $\rho^{\otimes M}$, $Q_s$ is multiplicative, so $\xi_Q(\rho)$ is the exponent per channel use.

\textit{Equal-photon-number decomposition.}--- We state and prove the following lemma:

\begin{lemma}\label{lem:decomp}
Let $\rho$ be a state on $m$ modes with $\Tr[\nh\rho]=\nb<\infty$.
Then there are countably many pure states $|\psi_j\rangle$ and weights $w_j>0$, $\sum_jw_j=1$, such that
\begin{equation}
\rho=\sum_j w_j\,|\psi_j\rangle\langle\psi_j|,\qquad \langle\psi_j|\nh|\psi_j\rangle=\nb\quad\text{for all } j .
\label{eq:decomp}
\end{equation}
\end{lemma}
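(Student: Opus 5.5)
The plan is to work with the shifted operator $\hat A=\nh-\nb$, so that the constraint reads $\Tr[\hat A\rho]=0$ and the goal is to write $\rho=\sum_j x_jx_j^\dagger$ with unnormalized vectors $x_j$ satisfying $x_j^\dagger\hat A x_j=0$. Setting $w_j=\|x_j\|^2$ and $|\psi_j\rangle=x_j/\|x_j\|$ then gives Eq.~\eqref{eq:decomp}. Since $\nh\ge0$ is unbounded, every quadratic form $x^\dagger\nh x$ is understood in $[0,\infty]$ through the form domain of $\nh$. I would start from the spectral decomposition $\rho=\sum_k\lambda_k|k\rangle\langle k|$ and set $a_k=\langle k|\hat A|k\rangle$. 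The eigenvectors lie in the form domain of $\nh$ because $\Tr[\nh\rho]<\infty$. Moreover $\sum_k\lambda_k a_k=0$ converges absolutely, since $\sum_k\lambda_k\langle k|\nh|k\rangle=\nb$ is a sum of nonnegative terms.

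The basic tool is the two-vector step of Sturm--Zhang and Huang--Zhang. Let $x,y$ satisfy $x^\dagger\hat Ax>0>y^\dagger\hat Ay$. After rephasing $y$ so that $x^\dagger\hat Ay$ is real, put $u=(x+\gamma y)/\sqrt{1+\gamma^2}$ and $v=(y-\gamma x)/\sqrt{1+\gamma^2}$ with $\gamma$ real. Then $uu^\dagger+vv^\dagger=xx^\dagger+yy^\dagger$ for every $\gamma$. The quantity $(1+\gamma^2)\,u^\dagger\hat Au$ is a real quadratic in $\gamma$ that is positive at $\gamma=0$ and negative as $\gamma\to\infty$, so some $\gamma$ gives $u^\dagger\hat Au=0$. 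Then $v^\dagger\hat Av=x^\dagger\hat Ax+y^\dagger\hat Ay$. In the finite-rank case, iterating this step, always pairing a positive-$\hat A$ vector with a negative one, produces one zero vector per step and terminates, which proves the lemma there.

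For infinite rank I would run the step greedily with a single running remainder $r_N$. Initially $r_0=\sqrt{\lambda_1}|1\rangle$. At step $N$, if $r_N^\dagger\hat Ar_N=0$ I output it as a piece. Otherwise I combine $r_N$ with the next unprocessed eigenvector term, $\sqrt{\lambda_k}|k\rangle$ with $a_k$ of opposite sign. If no such term exists yet, I absorb same-sign terms into $r_N$, or output terms with $a_k=0$ directly. The step outputs a zero piece and a new remainder. Every eigenvector index must eventually be processed; I ensure this by always advancing along the eigenbasis and handling same-sign terms by pairing them later. The invariant is
\begin{equation}
\rho=\sum_{j\le N}x_jx_j^\dagger+r_Nr_N^\dagger+T_N ,
\end{equation}
where $T_N$ is the unprocessed tail of the spectral sum, so $\|T_N\|_1\to0$ and $\Tr[\nh T_N]\to0$.

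The main obstacle is the limit: the remainder's norm need not go to zero, so I cannot simply discard it. My fix is as follows. The pieces $x_jx_j^\dagger$ are positive and their partial sums are bounded by $\rho$, so $\sum_jx_jx_j^\dagger$ converges in trace norm. Hence $r_Nr_N^\dagger$ converges in trace norm to a positive operator $R$ of rank at most one, say $R=rr^\dagger$ (possibly $r=0$). This gives $\rho=\sum_jx_jx_j^\dagger+rr^\dagger$. Finally I use additivity of the nonnegative functional $\Tr[\nh\,\cdot\,]$ over countable positive sums, which follows from monotone convergence. It gives $\nb=\Tr[\nh\rho]=\sum_j\nb\|x_j\|^2+\Tr[\nh R]$ and $\Tr\rho=\sum_j\|x_j\|^2+\Tr R$, hence $\Tr[\hat AR]=0$. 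So $r$, if nonzero, is one more admissible pure component, and this completes the proof. I expect the delicate points to be checking that all quadratic forms stay finite, and making the index bookkeeping rigorous so that $T_N\to0$.
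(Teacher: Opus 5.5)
Your overall route is the paper's. You use the same two-vector rotation; your real root $\gamma$ of a quadratic is equivalent to the paper's explicit $\tan^2\theta=a/|b|$ with the cross term made purely imaginary. You also use a single carried vector, monotone convergence of the set-aside pieces, a rank-at-most-one limit for the carried vector, and additivity of $\Tr[\nh\,\cdot\,]$ to show that this limit has zero excess.

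The genuine gap is in the iteration, which is where you expected trouble. ``Absorbing'' a same-sign term into $r_N$ is not a valid operation. The sum $r_Nr_N^\dagger+\lambda_k|k\rangle\langle k|$ generally has rank two, so it is not of the form $r_{N+1}r_{N+1}^\dagger$. If the remainder is allowed to grow in rank, your limit step, which needs $R$ of rank at most one, fails. Same-sign terms must instead stay untouched in the pool of unused terms; the paper's two ordered lists $\mathcal P$ and $\mathcal N$ do exactly this. But then $T_N$ is no longer a tail of the spectral sum in index order. ``Always advancing along the eigenbasis'' does not by itself guarantee that a deferred term is ever used.

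The missing idea is the paper's partial-sum argument. Always pair the remainder with the earliest unused term of opposite sign, and after outputting a zero-excess remainder, restart from the earliest unused term. The rotations preserve $xx^\dagger+yy^\dagger$ and the outputs have zero excess, so the remainder's excess equals $-\sum_{k\,\mathrm{unused}}\lambda_k a_k$. Suppose some positive term $k_0$ were never used. Every restart, and every pairing of a negative remainder with a positive term, consumes one of the finitely many indices below $k_0$. So after finitely many steps the remainder stays positive and consumes the negative terms one by one. Its excess then tends to $-\sum_{k\in S}\lambda_k a_k\le-\lambda_{k_0}a_{k_0}<0$, where $S$ is the now fixed set of unused positive terms. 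This is a contradiction, and the same argument works with signs reversed. Hence every index is used after finitely many steps, so $T_N\le\sum_{k>K_N}\lambda_k|k\rangle\langle k|$ with $K_N\to\infty$ and $\|T_N\|_1\to0$. With this in place, the rest of your proof goes through as written.
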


\begin{proof}
Start from the spectral decomposition $\rho=\sum_j|v_j\rangle\langle v_j|$, with unnormalized eigenvectors, $\langle v_j|v_j\rangle=\lambda_j$.
Define the following real number (which can be viewed as photon-number excess),
\begin{equation}
a_j=\langle v_j|(\nh-\nb)|v_j\rangle .
\label{eq:excess}
\end{equation}
Constraint~\eqref{eq:constraint} reads as $\sum_ja_j=0$, and the sum converges absolutely since $|a_j|\le\langle v_j|\nh|v_j\rangle+\nb\lambda_j$.
Vectors with $a_j=0$ already have mean photon number $\nb$ once normalized, and we set them aside.
The others go into two lists, $\mathcal P$ (with $a_j>0$) and $\mathcal N$ (with $a_j<0$), each kept in the original order.

\emph{Elementary step.}
Take $|u\rangle$ from $\mathcal P$ and $|v\rangle$ from $\mathcal N$, with (defined in Eq. \eqref{eq:excess}) $a>0$ and $b<0$.
Let $c=\langle u|(\nh-\nb)|v\rangle$, which is finite by the Cauchy--Schwarz inequality, and choose a phase $\varphi$ with $\mathrm{Re}(e^{i\varphi}c)=0$.
Define
\begin{align}
|w\rangle&=\cos\theta\,|u\rangle+e^{i\varphi}\sin\theta\,|v\rangle,\nonumber\\
|w'\rangle&=-\sin\theta\,|u\rangle+e^{i\varphi}\cos\theta\,|v\rangle,\qquad \tan^2\theta=\frac{a}{|b|} .
\label{eq:rotation}
\end{align}
This is a unitary rotation of the pair, so $|w\rangle\langle w|+|w'\rangle\langle w'|=|u\rangle\langle u|+|v\rangle\langle v|$.
The cross terms vanish by the choice of $\varphi$, and
\begin{align}
\langle w|(\nh-\nb)|w\rangle&=a\cos^2\theta+b\sin^2\theta=0,\nonumber\\
\langle w'|(\nh-\nb)|w'\rangle&=a+b,
\label{eq:step}
\end{align}
where we used the choice of $\theta$ given in Eq. \eqref{eq:rotation}. Therefore, $|w\rangle$ has mean photon number exactly $\nb$, and $|w'\rangle$ carries the remaining excess $a+b$.

\emph{Iteration.}
Apply the previous step to the first elements of $\mathcal P$ and $\mathcal N$.
Set $|w\rangle$ aside.
The vector $|w'\rangle$ replaces $|u\rangle$ at the head of $\mathcal P$ if $a+b>0$, replaces $|v\rangle$ at the head of $\mathcal N$ if $a+b<0$, and is set aside if $a+b=0$; the other member of the pair has been used already. Then
repeat:
each step uses up one of the original vectors.
Moreover, every original vector is used after finitely many steps.
For instance, the head of $\mathcal P$ has an excess equal to a partial sum of the positive $a_j$ plus a partial sum of the negative $a_j$; since all $a_j$ add up to zero, this excess becomes non-positive after finitely many steps as long as further vectors remain in $\mathcal P$, and the list moves on.

\emph{Limit.}
After $k$ steps we have
\begin{equation}
\rho=F_k+|c_k\rangle\langle c_k|+T_k ,
\label{eq:partial}
\end{equation}
where $F_k$ is the sum of the vectors set aside, all with mean photon number $\nb$, $|c_k\rangle$ is the current carried vector (possibly zero), and $T_k$ is the sum of the original vectors not yet used.
As $k\to\infty$, $T_k\to0$ because it is the tail of a convergent sum.
The operators $F_k$ grow with $k$ and are bounded by $\rho$, so they converge to $F=\sum_j|z_j\rangle\langle z_j|$.
Then $|c_k\rangle\langle c_k|$ converges as well, to an operator $|c\rangle\langle c|$ of rank at most one, and $\rho=\sum_j|z_j\rangle\langle z_j|+|c\rangle\langle c|$.
It remains to check the carried vector, since its excess tends to zero but energy could in principle escape in the limit.
Because $\nh\ge0$, every term contributes non-negatively to the mean photon number, and
\begin{align}
\nb=\Tr[\nh\rho]&=\sum_j\langle z_j|\nh|z_j\rangle+\langle c|\nh|c\rangle\nonumber\\
&=\nb\sum_j\langle z_j|z_j\rangle+\langle c|\nh|c\rangle .
\label{eq:additivity}
\end{align}
Since $\sum_j\langle z_j|z_j\rangle+\langle c|c\rangle=\Tr\rho=1$, this gives $\langle c|\nh|c\rangle=\nb\,\langle c|c\rangle$: the carried vector, if non-zero, also has mean photon number $\nb$.
Normalizing all vectors, with $w_j=\langle z_j|z_j\rangle$, gives Eq.~\eqref{eq:decomp}.
\end{proof}

For a state of finite rank $r$ the iteration stops after at most $r-1$ steps, and Lemma~\ref{lem:decomp} reduces to the rank-one decomposition of Refs.~\cite{SturmZhang2003,HuangZhang2007}.
The decomposition is not unique.
Note also that nothing in the proof is specific to $\nh$: any Hamiltonian bounded from below, for example $\sum_l\omega_l\hat a_l^\dagger\hat a_l$ with unequal frequencies, can be used instead.

\textit{Main result.}---The following the theorem is the main result:
\begin{theorem}\label{thm:main}
Let $\mathcal N_1,\mathcal N_2$ be any two quantum channels acting on $m$ bosonic modes, and let $\nb\ge0$.
Then
\begin{equation}
\sup_{\rho:\ \Tr[\nh\rho]=\nb}\ \xi_Q(\rho)\ =\sup_{|\psi\rangle:\ \langle\psi|\nh|\psi\rangle=\nb}\ \xi_Q(\psi) .
\label{eq:main}
\end{equation}
More precisely, if the infimum over $s$ in $Q(\rho)$ is attained, one of the pure states $|\psi_j\rangle$ of Eq.~\eqref{eq:decomp} satisfies $\xi_Q(\psi_j)\ge\xi_Q(\rho)$.
\end{theorem}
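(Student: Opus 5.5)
We combine Lemma~\ref{lem:decomp} with joint concavity of $Q_s$. Since every pure state with $\langle\psi|\nh|\psi\rangle=\nb$ is an admissible probe, the right-hand side of Eq.~\eqref{eq:main} is at most the left-hand side. For the reverse inequality, fix an admissible $\rho$ and write $\rho=\sum_j w_j|\psi_j\rangle\langle\psi_j|$ as in Lemma~\ref{lem:decomp}. By linearity of the channels (with trace-norm convergence of the countable sum, since the $\mathcal N_i$ are trace-norm contractive), we have $\mathcal N_i(\rho)=\sum_j w_j\mathcal N_i(\psi_j)$ for $i=1,2$.

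The key analytic input is Lieb's concavity theorem: for fixed $s\in[0,1]$, the map $(\sigma_1,\sigma_2)\mapsto\Tr[\sigma_1^s\sigma_2^{1-s}]$ is jointly concave. At the endpoints $s=0,1$ this holds because $\Tr[\sigma_1^0\sigma_2]$ equals $\Tr[P_{\mathrm{supp}\,\sigma_1}\sigma_2]$, and the support of a mixture is the closed span of the supports, so the value is monotone in the right direction. I would first prove $Q_s(\rho)\ge\sum_j w_jQ_s(\psi_j)$ for finite mixtures, citing Lieb. I would then pass to countable mixtures in infinite dimensions by truncating the sum and renormalizing. For $0<s<1$ the map $\sigma\mapsto\sigma^s$ is trace-norm continuous on states, so the truncation error vanishes. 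At the endpoints I would handle it separately, or simply avoid them. This limiting argument is the step I expect to be the main technical obstacle.

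Now let $s^*$ attain the infimum in $Q(\rho)$. Then
\begin{equation*}
Q(\rho)=Q_{s^*}(\rho)\ge\sum_j w_jQ_{s^*}(\psi_j)\ge\sum_j w_jQ(\psi_j)\ge\inf_j Q(\psi_j).
\end{equation*}
Since the $w_j$ are positive and sum to one, some $j$ satisfies $Q(\psi_j)\le Q(\rho)$. If not, every $Q(\psi_j)>Q(\rho)$, and then the weighted average would exceed $Q(\rho)$ strictly. This gives $\xi_Q(\psi_j)\ge\xi_Q(\rho)$, which is the "more precisely" part of the theorem.

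If the infimum is not attained, take $s_n$ with $Q_{s_n}(\rho)\to Q(\rho)$. The same chain gives pure admissible $\psi^{(n)}$ with $Q(\psi^{(n)})\le Q_{s_n}(\rho)$, hence $\xi_Q(\psi^{(n)})\ge -\ln Q_{s_n}(\rho)\to\xi_Q(\rho)$. Therefore the supremum over pure states is at least $\xi_Q(\rho)$, and taking the supremum over $\rho$ proves Eq.~\eqref{eq:main}.
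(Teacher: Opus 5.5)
Your proposal follows the paper's proof essentially step for step: Lemma~\ref{lem:decomp}, Lieb concavity extended to the countable mixture by truncation, the support-projector argument at $s=0,1$, and the averaging argument at an optimal or near-optimal $s$. The one justification to repair is the continuity claim, because in infinite dimensions $\sigma\mapsto\sigma^s$ is not trace-norm continuous (for a state with slowly decaying spectrum, $\sigma^s$ is not even trace class); what you need is Ando's inequality $\|A^s-B^s\|_{1/s}\le\|A-B\|_1^s$, which together with H\"older's inequality gives $|\Tr[\sigma_1^s\sigma_2^{1-s}]-\Tr[\tau_1^s\tau_2^{1-s}]|\le\|\sigma_1-\tau_1\|_1^s+\|\sigma_2-\tau_2\|_1^{1-s}$ for states, or alternatively you can bypass continuity by applying finite concavity to $\rho=\sum_{j\le K}w_j\psi_j+\bigl(1-\sum_{j\le K}w_j\bigr)\rho_K'$, with $\rho_K'$ the normalized tail, and discarding the non-negative last term.
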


\begin{proof}
Write $\rho=\sum_jw_j|\psi_j\rangle\langle\psi_j|$ as in Lemma~\ref{lem:decomp}.
The channels are linear, so $\mathcal N_i(\rho)=\sum_jw_j\mathcal N_i(\psi_j)$.
For $0<s<1$ the map $(A,B)\mapsto\Tr[A^sB^{1-s}]$ is jointly concave (Lieb's theorem~\cite{Lieb1973}), hence $Q_s(\rho)$ is a concave function of $\rho$.
Jensen's inequality then gives
\begin{equation}
Q_s(\rho)\ \ge\ \sum_jw_j\,Q_s(\psi_j) .
\label{eq:jensen}
\end{equation}
The sum in Eq.~\eqref{eq:jensen} may contain infinitely many terms; this does not cause a difficulty because $Q_s$ depends continuously on the probe. Truncating the decomposition after finitely many terms and renormalizing gives a state arbitrarily close to $\rho$ in trace norm. Jensen's inequality holds for the truncated mixture, and since $Q_s$ changes by an arbitrarily small amount when the probe does~\cite{Ando1988,Bhatia1997}, the inequality survives the limit.
At the end points the same inequality holds directly.
For $s=0$, $Q_0(\rho)=\Tr[\Pi\,\mathcal N_2(\rho)]$, with $\Pi$ the projector onto the support of $\mathcal N_1(\rho)$.
This support contains the support of every $\mathcal N_1(\psi_j)$, so $\Pi$ is larger than each of the corresponding projectors and Eq.~\eqref{eq:jensen} follows; the case $s=1$ is analogous.

A weighted average cannot be smaller than all of its terms, so for every $s$ there is a $j$ with $Q_s(\psi_j)\le Q_s(\rho)$.
Let $s^*$ attain the infimum in $Q(\rho)$, and pick $j$ for $s=s^*$.
Then
\begin{equation}
Q(\psi_j)\ \le\ Q_{s^*}(\psi_j)\ \le\ Q_{s^*}(\rho)\ =\ Q(\rho) ,
\label{eq:chain}
\end{equation}
that is, $\xi_Q(\psi_j)\ge\xi_Q(\rho)$, and $|\psi_j\rangle$ has mean photon number $\nb$.
If the infimum is not attained, the same argument applied to an $s$ with $Q_s(\rho)\le Q(\rho)+\epsilon$ gives a pure state with $Q(\psi_j)\le Q(\rho)+\epsilon$ for every $\epsilon>0$, which proves Eq.~\eqref{eq:main}.
\end{proof}

The argument uses nothing about the channels except linearity, so it applies to Gaussian and non-Gaussian channels alike, to any number of output modes, and to channels with different output spaces.
It also applies to other figures of merit.
The single-shot error probability with priors $p$ and $1-p$ is $P_e=\frac12\big(1-\|p\,\mathcal N_1(\rho)-(1-p)\,\mathcal N_2(\rho)\|_1\big)$~\cite{Helstrom1976}, and by the triangle inequality $\|p\,\mathcal N_1(\rho)-(1-p)\,\mathcal N_2(\rho)\|_1\le\sum_jw_j\|p\,\mathcal N_1(\psi_j)-(1-p)\,\mathcal N_2(\psi_j)\|_1$, so again one of the $|\psi_j\rangle$ does at least as well.
The finite-$M$ Chernoff bound with unequal priors, $\min_sp^s(1-p)^{1-s}Q_s(\rho)^M$, and the error exponents obtained when one prior decays exponentially with $M$ are built from a single $Q_s$ at the optimal $s$, and the same chain~\eqref{eq:chain} applies.

\textit{Conclusion.}---
We have shown that, for the discrimination of any two quantum channels with a probe of fixed mean photon number and without an ancilla, pure probes are optimal.
The proof combines the concavity of the Chernoff quantities with a decomposition of any state into pure states of exactly the same mean photon number, valid for density operators of infinite rank.
The result justifies the restriction to pure probes that is common in quantum illumination, quantum reading, loss and phase sensing, and related tasks, and it reduces the search for optimal probes to pure states without any loss in the energy budget.
Which pure states are optimal depends on the channels, and identifying them, in particular within physically accessible classes, remains the task for each application. 

Three remarks must be made at this point: 
\begin{enumerate}
    \item If an ancilla is available and the constraint counts only the photons sent through the channel, pure probes are optimal by purification~\cite{NairYen2011,Huang2026}.
If instead the constraint counts the photons in the probe and in the ancilla together, purification costs photons and is no longer free.
Theorem~\ref{thm:main} covers this case: one applies it to the joint system, with the channels $\mathcal N_i\otimes\mathrm{id}$ and $\nh$ the total photon number.

\item Theorem~\ref{thm:main} concerns the optimization over all states.
It does not imply that pure states are optimal within a restricted class, because the pure states $|\psi_j\rangle$ of Lemma~\ref{lem:decomp} need not belong to that class.
For number-diagonal probes, for instance, no pure state (in that case, a Fock state) of non-integer $\nb$ exists at all.
For Gaussian probes the decomposition of a mixed Gaussian state produces non-Gaussian pure states, and the optimality of pure Gaussian probes among Gaussian ones would require a separate analysis. The optimal pure probe of Theorem~\ref{thm:main} may well be non-Gaussian~\cite{Nair2011}.

\item It would be interesting to rigorously see if Lemma~\ref{lem:decomp} applies to other figures of merit that are convex in the probe, provided Jensen's inequality can be used for the resulting mixture.
For probes of finite rank only finitely many terms appear and no continuity is needed.
This includes the quantum Fisher information, where the same issue with the energy constraint arises~\cite{GongGuha2026,Maliakal2026}.
\end{enumerate}

\bibliography{refs}

\end{document}